\renewcommand{\int}{\operatorname{int}}
\newcommand{\term}{\textit}
\newcommand{\tuple}[1]{\left\langle{#1}\right\rangle}
\theoremstyle{theorem}
\newtheorem{theorem}{Theorem}
\theoremstyle{definition}
\newtheorem{definition}[theorem]{Definition}
\begin{document}

\title{Database-Driven Mathematical Inquiry}
\markright{Database-Driven Mathematical Inquiry}
\author{Steven Clontz}%{Steven Clontz}  %DO NOT FILL IN AUTHOR'S NAMES UNTIL YOU RECEIVE YOUR PROVISIONAL ACCEPT LETTER. SUBMISSIONS TO THE MONTHLY ARE DOUBLE BLIND.

\maketitle

\begin{abstract}
Recent advances in computing have changed not only the nature
of mathematical computation, but mathematical proof and inquiry
itself. While artificial intelligence and formalized mathematics
have been the major topics of this conversation, this paper
explores another class of tools for advancing mathematics research:
databases of mathematical objects that enable semantic search.
In addition to defining and exploring examples of these tools,
we illustrate a particular line of research that was inspired
and enabled by one such database.
\end{abstract}

\section{Introduction}

While machines have been used to aid in mathematical computation
for centuries (with perhaps the venerable abacus being the first
such tool, used in Mesopotamia as far back as 2300 BCE
\cite{ifrah2000universal}), the use
of mechanical or electronic tools to advance the business of
mathematical proof is a much more recent phenomenon.

In
his paper \textit{What is the point of computers?
A question for pure mathematicians} \cite{buzzard2022point},
Kevin Buzzard provides an
excellent survey of the history of computationally-aided
mathematical inquiry over the past few decades. While exhaustive
search, as was utilized to prove the classic four-color theorem
from graph theory \cite{gonthier2008four}, is one method of using
computers to tackle difficult problems constrained to a finite
search space, recent years have increased the accessibility
of so-called formal verification of mathematics, the main topic
of Buzzard's survey. Such tools go
under various names: ``proof assistants'', 
``theorem provers'', ``computer-verified mathematics'',
and so on; perhaps the most
active modern example is Lean \cite{deMoura2018}.

Such tools provide a mechanism for mathematicians to formalize
their arguments as computer code
that can be carefully checked by machine down to the finest detail.
Of course, needing to refine down to first principles
the usual hand waving of
muddlesome technical details is a non-starter
for most mathematicians; as such,
these tools provide "tactics" that encode standard arguments
into reusable instructions on how to build up
the rigorous low-level proof. As an example, consider how
one might prove that $(ab+ac)+ad=ba+(ca+da)$ for all $a,b,c,d\in\mathbb R$.
While one might accomplish this through applications of
associativity and commutativity (in Lean's \texttt{mathlib}, one might
\verb|rewrite| this goal using the theorems \verb|add_assoc| and \verb|mul_comm|),
the single Lean tactic \verb|ring| alternatively
handles this proof in just four characters: \textit{of course} the
usual properties of \verb|ring|s show that these expressions
are equal to the other.

While such tactics have enabled the usual hand-waving of
technicalities in many cases, it has not (to date) entirely
brought formalized mathematics up to speed with the pace of
classical mathematics. In \cite{taomachine}, Terence Tao 
suggests that the difficulty ratio of writing a correct formal proof
compared with writing a correct informal proof ``is still well above
one[, but] I believe there is no fundamental obstacle
to dropping this ratio below one, especially with increased
integration with AI [...] and other
tools; this would be transformative to our field.'' Indeed, one
may imagine a not-too-distant future where AI tools can
predict formal mathematical proofs,
which are then immediately checked by a computer verification systems,
sending feedback to the AI, rinse and repeat,
e.g. \cite{yang2023leandojo}.

\section{Semantic Mathematical Databases, Large and Small}

However, there is an intermediate goal suggested by Buzzard in his survey:
databases of mathematical concepts that enable semantic search,
which can be used to query the corpus of peer-reviewed and/or
formalized mathematics.
Buzzard argues that ``this sort of tool [...]
has the potential to beat the techniques currently used by PhD students
(`google hopefully,' `page through a textbook/paper hopefully,'
`ask on a maths website and then wait,' `ask another human') hands down''.

But the benefits of curated databases of mathematical knowledge
are not limited to students.
A memory that has stuck with me since graduate school is that
of a particular topology seminar. 
The invited speaker noted at one point
during his talk (something like), ``and I will buy
a bottle of wine a dinner tonight for
whoever can help me remember how to construct
an example of a normal $T_0$ space which is both arc connected and
ultraconnected, but fails to be regular''. This was the early 2010s,
and many of us had smart phones in our pocket, so surely this would be
easily answered by a Google search? Even today, the ``trivial'' example
(take the reals with the topology of rays $(x,\infty)$,
example S42 of the $\pi$-Base \cite{pibase}) is not
revealed by a standard web search using those terms,
and most of the trailheads that exist today and
might lead one to this answer are websites and forum posts that did
not exist at the time. Indeed, while Google has improved
over the past decade or so in how it guesses semantics of words from
their context in a search (or the user's search history),
one might not be surprised that searching
for the words ``regular'' and ``normal'' at that time
could fail to specifically target mathematical results.

Perhaps
what stuck with me was the seeming consensus of the room that
\textit{of course} such an example existed, even though no one could
quite articulate its details on the fly. As a student at the time,
this was fairly frustrating; while now I'm experienced enough to
understand the phenomenon, this is the kind of knowledge that only
comes with time immersed in a field. And even for experts, understanding
that a result is ``known'' is not as useful as knowing ``where'' and
``how'', should the result or its details be needed for another application.\footnote{
During the preparation of this note, my colleague
Katja Berčič shared this relevant
(but perhaps
apocryphal) anecdote concerning the graph theorists W.T. Tutte and
Harold Coxeter: the latter mathematician asked the former in a phone
call whether he knew of anyone studying a certain beautiful example of a
$3$-regular graph with $28$ vertices and $42$ edges. As the story
goes, Tutte replied that of course he had, and that it was quite
famously known by researchers in his circle as a
\href{https://en.wikipedia.org/wiki/Coxeter_graph}{Coxeter graph}.
}

Let's define a \textbf{semantic mathematical database} as a database of
mathematical objects or concepts that includes some technological
measure to automatically connect its entries based upon their mathematical meaning.
This distinction might be best illustrated by considering a common
non-example: mathematics wikis. In addition to the wealth
of mathematical knowledge curated by Wikipedia's dedicated contributors,
there are several mathematics-specific wikis as well: ProofWiki.org,
Math.Fandom.com, EnyclopediaOfMath.org, nCatLab.org, and more. While
these certainly provide a great service by curating mathematical
knowledge in an open and collaborative fashion, and the concepts
they cover are indeed interconnected by relevant hyperlinks, all
these connections must be made by human contributors. For example,
ProofWiki doesn't ``know'' the difference between the
\href{https://proofwiki.org/wiki/Definition:Rational_Number}{rational numbers}
and the
\href{https://proofwiki.org/wiki/Definition:Irrational_Number}{irrational numbers};
each page is just a human-curated document with human-authored hyperlinks
pointing from one page to the next.

In contrast, while entries in a semantic mathematical database frequently
contain narrative descriptions of the objects they represent, this exposition
must be paired with \textbf{metadata} describing its mathematical features.
An extreme example of this is the \textit{L-functions and modular forms database},
or LMFDB for short \cite{lmfdb}. 
To quote their homepage, ``The LMFDB is a database of mathematical objects arising in number theory and arithmetic geometry that illustrates some of the mathematical connections predicted by the Langlands program.''
Like on
a wiki, each object is given its own homepage
(e.g. \url{https://www.lmfdb.org/L/2/31/31.30/c0/0/0} for the L-function
given the ID 2-31-31.30-c0-0-0),
but unlike a wiki, there is no exposition at all
for the majority of entries. Instead, visitors of 2-31-31.30-c0-0-0
are provided a listing of its metadata: its Dirichlet series
($L(s)=1-2^{-s}-5^{-s}+-7^{-s}+8^{-s}+9^{-s}+\dots$), its functional equation
($\Lambda(s)=31^{s/2}\Gamma_{\mathbb C}(s)L(s)=\Lambda(1-s)$), its degree ($2$),
its conductor ($31$), its sign ($1$), and so on.

This lack of exposition is quite natural, as the LMFDB is an example of a
\textbf{large} semantic mathematical database, wherein the vast majority of
its entries are populated programmatically. For example, a sentence description
is provided for \url{https://www.lmfdb.org/L/1/1/1.1/r0/0/0},
the well-known Riemann zeta function, but to populate the LMFDB with
the other $1,448,483$ Dirichlet L-functions
associated to primitive Dirichlet characters with conductor $N\le 2800$,
an exhaustive computation by algorithms described in David Platt's PhD
thesis was required \cite{platt2011computing}.\footnote{
It's worth clarifying that
there are indeed several curated subsets of the LMFDB that include
exposition, for example,
\href{https://www.lmfdb.org/EllipticCurve/Q/interesting}
{this list of ``interesting'' elliptical curves}.
}

Given the computational intensity of generating so much data, the value of
having such databases pre-computed is immediately seen. And with well-designed
tools for navigation and \textbf{semantic} search (wherein we not only look for
the presence/absence of certain keywords in a name or description,
but are able to filter objects based upon the mathematical properties
stored within their metadata),
researchers and students have the potential to find the needles they need
within such haystacks.

In contrast, we may also consider \textbf{small} semantic mathematical
databases, where each entry is individually contributed and moderated
by hand. While not small in a certain sense (with over $370,000$ entries
to date), the On-Line Encyclopedia of Integer Sequences (OEIS) 
\cite{oeis} is perhaps the best known small mathematics database.
For example, the Fibonacci sequence $0,1,1,2,3,5,8,\dots$
is given the ID A000045 and is available at 
\url{https://oeis.org/A000045}.

While search is supported (A000045 is the first result for the
search \texttt{3,5,8,13,21} on its homepage, despite being
modeled starting at \texttt{0}), one might not consider this
database to be very semantic. For example, while it's noted that
this sequence may be obtained from the formula
\(F(n) = F(n-1) + F(n-2)\) (with $F(0)=0,F(1)=1$),
there's no direct way to obtain
a list of all OEIS sequences obtained from the same formula
but different base values.

In contrast, a small example that is quite semantic is the
House of Graphs (HoG) \cite{coolsaet2023house}. A particularly
unique (not to mention, \textit{slick}) feature of HoG is
its search interface, which allows users to actually draw a graph of
interest with their mouse, and it will query its database for
a match. Of course, users are also able to search by specifying
certain graph invariants; for example,
I was able to obtain over $4,000$ bipartite graphs of
average degree $3$.

However, chief among its
features for me personally (as an only casual graph theorist)
is the following sentence I will paraphrase
from its homepage, which I feel
quite succinctly describes the utility of small semantic databases
for researchers:
\textbf{``Most [mathematicians] will agree that among the vast number
of [objects in a given field] that exist,
there are only a few thousand that can be
considered really \textit{interesting}.''} So if the job of
a large database is to help its users find a needle in a haystack,
then the job of a small database is to serve as a well-curated
museum, featuring the most important or interesting examples
from its domain.

\section{The $\pi$-Base model for small databases}

The $\pi$-Base is another example of a small semantic
mathematical database \cite{pibase}. Its implementation at
\url{https://topology.pi-Base.org} is dedicated to results from
general topology; in the next section we will dig deeper
into some related research. Likewise, the reader is directed
to [redacted] for an article describing its specific history,
along with its cyberinfrastructure
(software, deployment, etc.).

Instead, here we will explore the ``$\pi$-Base model'' for
organizing a small semantic mathematical database, appropriate
for mathematicians interested in
certain \textbf{objects}, the \textbf{properties}
each object may or may not satisfy, and \textbf{theorems}
that relate these properties.

\subsection{Properties}

Properties may be considered the atoms of the $\pi$-Base
model.
Each property is encapsulated in a single text file, illustrated
in Listing \ref{listing:property} for the $T_0$ separation axiom.
Each property is given an ID of the form \verb|P######|,
a short string to serve as its name (along with any alternative
aliases the property may go by), and a longer description with
appropriate references. In this sense, properties are quite
similar to the pages of a wiki (albeit with a different mechanism
of peer review, to be described in \ref{peer-review});
it is how they are integrated into the objects and theorems that
makes this database a semantic one.

\begin{lstlisting}[language=yaml,caption=Source for \href{https://topology.pi-base.org/properties/P000001}{property P1 on $\pi$-Base},label=listing:property]
---
uid: P000001
name: "$T_0$"
aliases:
  - Kolmogorov
  - T0
refs:
  - doi: 10.1007/978-1-4612-6290-9
    name: Counterexamples in Topology
---
\end{lstlisting}
\begin{lstlisting}[basicstyle=\small\ttfamily]
Given any two distinct points, there is an open set containing
one but not the other.

Defined on page 11 of {{doi:10.1007/978-1-4612-6290-9}}.
\end{lstlisting}

\subsection{Objects}

Objects have the same features as properties: IDs (of
the form \verb|S######| at topology.pi-base.org, since it
models topological Spaces), a name with aliases, and
a longer description with references; see Listing
\ref{listing:space}.

\begin{lstlisting}[language=yaml,caption=Source for \href{https://topology.pi-base.org/spaces/S000001}{space S1 on $\pi$-Base},label=listing:space]
---
uid: S000001
name: Discrete topology on a two-point set
aliases:
  - Finite Discrete Topology
refs:
  - doi: 10.1007/978-1-4612-6290-9
    name: Counterexamples in Topology
  - wikipedia: Discrete_space
    name: Discrete Space on Wikipedia
---
\end{lstlisting}
\begin{lstlisting}[basicstyle=\small\ttfamily]
Let $X=2=\{0,1\}$ be the space containing two points and let
all subsets of $X$ be open.

Defined as counterexample #1 ("Finite Discrete Topology") in 
{{doi:10.1007/978-1-4612-6290-9}}.
\end{lstlisting}

So while the $\pi$-Base doesn't ``know'' the mathematical
content of these files, and isn't meant to automatically verify
this content (a task left to its volunteer reviewers, see
\ref{peer-review}),
it is aware of the boolean values asserted for
specific space-property pairs. While space S1 is indeed P1,
we will see soon why this is not manually asserted in the $\pi$-Base
database; we instead illustrate in Listing \ref{listing:trait} how
S1 is asserted to be P52 (discrete).

\begin{lstlisting}[language=yaml,caption=Source for \href{https://topology.pi-base.org/spaces/S000001/properties/P000052}{space-property pair S1$|$P52 on $\pi$-Base},label=listing:trait]
---
space: S000001 # The two-point discrete space
property: P000052 # The discrete property
value: true
refs:
- doi: 10.1007/978-1-4612-6290-9
  name: Counterexamples in Topology
---
\end{lstlisting}
\begin{lstlisting}[basicstyle=\small\ttfamily]
All subsets of this space are open by definition.
\end{lstlisting}

\subsection{Theorems}

So why not assert that the two-point discrete space is
$T_0$? The glue that holds the $\pi$-Base model together is its
theorems. Each theorem asserts that if an object satisfies
boolean values for one or more properties, it must also satisfy
the given boolean value for another property.

Therefore, there's no need to assert that S1 is P1, given 
the assertion it is P52, along with the
theorems in listings \ref{listing:thm1} and \ref{listing:thm2}.

\begin{lstlisting}[language=yaml,caption=Source for \href{https://topology.pi-base.org/theorems/T000042}{theorem T42 on $\pi$-Base},label=listing:thm1]
---
uid: T000042
if:
  P000052: true # The discrete property
then:
  P000002: true # The $T_1$ separation axiom
refs:
- doi: 10.1007/978-1-4612-6290-9
  name: Counterexamples in Topology
---
\end{lstlisting}
\begin{lstlisting}[basicstyle=\small\ttfamily]
Asserted on Figure 9 of {{doi:10.1007/978-1-4612-6290-9}}.
\end{lstlisting}
\begin{lstlisting}[language=yaml,caption=Source for \href{https://topology.pi-base.org/theorems/T000119}{theorem T119 on $\pi$-Base},label=listing:thm2]
---
uid: T000119
if:
  P000002: true # The $T_1$ separation axiom
then:
  P000001: true # The $T_0$ separation axiom
refs:
- doi: 10.1007/978-1-4612-6290-9
  name: Counterexamples in Topology
---
\end{lstlisting}
\begin{lstlisting}[basicstyle=\small\ttfamily]
By definition, see page 11 of {{doi:10.1007/978-1-4612-6290-9}}.
\end{lstlisting}

In fact, only three properties are currently asserted for
the two-point discrete space S1, and they characterize it uniquely: 
S52 (the space is discrete), P125 (the space has multiple points),
and \textit{not} P175 (the space has three or more points).

In addition to reducing the effort required to enter data
for every space-property pair in the database, these theorems
drastically reduce the room for error. For one, automated processes
use these theorems to reject any submission to the $\pi$-Base
that would violate a theorem. Furthermore, theorems are used
to aid in search: see
\href{https://topology.pi-base.org/spaces?q=Discrete%2B%7E%24T_0%24}
{this result} seeking examples of discrete spaces that fail to
be $T_0$: not only are no spaces found, the $\pi$-Base application
helpfully generates a proof for why they cannot exist,
by chaining its theorems together.

Indeed, it's this \textbf{automated deduction} that
distinguishes the $\pi$-Base model from most other mathematical
database applications used by mathematicians today,
semantic or otherwise. Buzzard cites this critical
feature, enabled by the formal representation of mathematical
\textit{results} (without necessarily requiring formal
verification of their \textit{proofs}), as a powerful mechanism to
enable ``computer assisted learning'', as well as to serve as an
important stopgap in the absence of tooling that, as a
practical matter, can support full formal verification
of all mathematical results by the general audience
of mathematics researchers \cite{buzzard2022point}.

\subsection{Peer review}\label{peer-review}

All the data in the $\pi$-Base is stored in these four
types of text files. So how does a community of contributors
collaborate on these?

Fortunately, such tools already exist: distributed version
control systems such as Git have served software engineers
since the early 2000s. While Git is not nearly as ubiquitous
a tool for mathematicians, in recent years services such
as GitHub have provided convenient ways for users to
engage with these systems with little more effort than
what's required to log into their web browser and start
writing \cite{g4m}.

Thus, the $\pi$-Base uses these well-established and secure
workflows for the business of peer review of database
contributions, with the added bonus that the entire contents
of its database is available as a free and open-source repository
of text files. In particular, any proposed contribution
to the database comes in the form of a ``pull request''
on GitHub. A discussion thread provided by GitHub for each
pull request allows for contributors
and reviewers to discuss the proposal, and ultimately the
request is either ``accepted and merged'', or ``closed''.

GitHub provides two additional forums: a Discussions
board where community members may freely discuss the project,
and an Issues board where community members may suggest specific
changes, without creating a formal pull request that
would actually implement the proposed improvement.
This allows for contributors to help curate the site, simply
by engaging in conversations with other community members.

Finally, similar to Wikipedia,
$\pi$-Base does not aim to be a home for original research.
However, it was quickly discovered that limiting data to what's
explicitly reflected in textbooks and peer-reviewed literature
is quite constraining: so much of ``known mathematics''
is either never explicitly expressed in such manuscripts, or is
presented in a context not immediately expressed within our model of
objects/properties/theorems. As such, forums such as
Math.StackExchange and MathOverflow are frequently used to ask and answer
questions that fill in $\pi$-Base's gaps,
relocating ``original research'' outside of the $\pi$-Base
GitHub organization to venues with more eyes, as well as helping
disseminate the $\pi$-Base resource to their broader audiences.

\subsection{Applications to other fields, and limitations}

In their 2013 \textit{Notices} article \cite{billey2013fingerprint},
Billey and Tenner pose this scenario:
``Suppose that $M$ is a mathematician and that
$M$ has just proved theorem $T$. How is $M$ to
know if her result is truly new, or if $T$ (or
perhaps some equivalent reformulation
of $T$) already exists in the literature?''\footnote{
One might be reminded of Tai's 1994
rediscovery of the trapezoidal rule from calculus, published 
in the peer-reviewed medical journal
\textit{Diabetes Care}\cite{tai1994mathematical}.
}
In fact, the $\pi$-Base software does exactly this, at least for
for theorems of the form ``all spaces with properties $\mathcal P$
also have properties $\mathcal Q$'', and bounded above by the
exhaustiveness of our community's contributions.

Given the almost-categorical abstractness of the $\pi$-Base
model, one might suspect that its underlying
free and open-source software could be used
for a variety of mathematical disciplines. But while there are
over 80 separate mathematical databases are tracked at 
\href{https://mathbases.org}{MathBases.org}, only
Topology.pi-Base.org uses the $\pi$-Base software itself.

Nonetheless, other databases follow our model to
some degree. A particular example is the Database of Ring
Theory (DaRT) \cite{dart}. DaRT tracks two kinds of objects:
rings and modules. Modules then have properties; however,
they also have connections to rings (e.g. module $M_{20}$
is given by $(x+(x,y)^2)$ over the ring $R_{23}=
F_2[x,y]/(x,y)^2$). Likewise, rings have properties,
but for non-commutative rings these may be asymmetric, that
is, hold differently for left or right multiplication. While
$\pi$-Base's software could in theory be enhanced to allow for this flexibility,
or forked to create a branch version of the software with
this flexibility, this has not yet happened.

Another major blocker for the use of $\pi$-Base software
for many databases is the need to not only model 
boolean properties,
but valued properties. As an example in graph theory, consider
regularity. A regular graph is a graph for which every vertex
has the same degree (number of incident edges). However,
many graph theoretic results need consider not only the regularity
of the graph, but the type of regularity: an $n$-regular graph
is a graph for which every vertex has degree $n$ specifically.
One could manually create separate properties for
$1$-regular, $2$-regular, and so on; at least, the amount
needed would be bounded above by the number of different
graphs in the database. But of course, this is not a very
elegant solution; the more sensible approach would perhaps
be to store not only boolean values for object-property pairs, but
also numerical values.

But this then reveals another limitation
of the principle domain served by the $\pi$-Base today:
properties from general topology typically aren't variable in
terms of an integer or even real number value, but instead
infinite cardinals.
As a concrete example, consider the topological property
of cardinality itself. As of writing, there are over twelve
different properties in the $\pi$-Base
that describe the cardinality of a space:

\begin{itemize}
\item $|X|=0$ (P137),
\item $|X|\geq 2$ (P125),
\item $|X|\geq 3$ (P175),
\item $|X|\geq 4$ (P176),
\item $|X|<\aleph_0$ (P78, i.e. finite),
\item $|X|\leq\aleph_0$ (P57, i.e. countable),
\item $|X|=\aleph_0$ (P181, i.e. countably infinite),
\item $|X|=\aleph_1$ (P114, i.e.
the smallest uncountable cardinality)
\item $|X|<\mathfrak c$ (P58, i.e., smaller than the cardinality of the reals)
\item $|X|\leq\mathfrak c$ (P163, i.e. no larger than the cardinality of the reals)
\item $|X|=\mathfrak c$ (P65, i.e. the cardinality of the reals)
\item $|X|\leq 2^{\mathfrak c}$ (P59, i.e. no larger than the cardinality of the power set of the reals)
\end{itemize}

For now, this ugly mess is the best idea we have! Naively, one might
consider a metaproperty of cardinality that could be assigned
a cardinal value. But the $\pi$-Base is implemented in Javascript,
whose built-in \texttt{Number} type merely supports a
single non-finite value \texttt{Infinity}.
Furthermore, the relationships of some of these properties depend
on your model of ZFC: the $\pi$-Base does not have any theorem
relating P114 and P58 above, as any such theorem would either 
imply the Continuum Hypothesis or its negation! Certainly,
progress could be made here, but it will take the community time
to find the most elegant and sustainable solution, particularly
if the goal is for $\pi$-Base software to support areas besides
general topology.

\section{The $\pi$-Base as a vehicle for original research}

I conclude this manuscript by sharing a tale of how the
author's attempt to model in the $\pi$-Base certain ``well-known''
(scare quotes quite intentional) results led to new mathematics
research. A more technical article covering the results referenced here
is available as [redacted]; here I instead focus on the story that
led to these results.

The first iteration of the $\pi$-Base was a digital representation
of Steen and Seebach's \textit{Counterexamples in Topology} 
\cite{steen1978counterexamples}. In this text, the usual topological
separation axioms\footnote{
It's worth noting that the authors of \textit{Counterexamples}
did not follow the modern convention where $x\geq y$ if and only if
$T_x\Rightarrow T_y$; another advantage of living semantic databases
is to correct and disambiguate questionable notational decisions
of the past.
} $T_0$ through $T_6$ were covered. In particular, let's
focus on these two:

\begin{definition}[P3 of $\pi$-Base]
A space is $T_2$ (or Hausdorff) provided for each pair of distinct
points $x,y$, there exist disjoint open sets $U,V$ with $x\in U$
and $y\in V$.
\end{definition}

\begin{definition}[P2 of $\pi$-Base]
A space is $T_1$ provided for each pair of distinct
points $x,y$, there exist 
(not necessarily disjoint) open sets $U,V$ with $x\in U\setminus V$
and $y\in V\setminus U$.
\end{definition}

Of course, a theorem was created in $\pi$-Base
to assert the immediate-from-the-definitions conclusion that
$T_2\Rightarrow T_1$. But during those early days,
the following two properties were contributed as well:

\begin{definition}[P99 of $\pi$-Base]
    A space is \term{US} (``Unique
    Sequential limits'')
    provided that every
    convergent sequence has a unique limit.
\end{definition}

\begin{definition}[P100 of $\pi$-Base]
    A space is \term{KC} 
    (``Kompacts are Closed'')
    provided that
    its compact subsets are closed.
\end{definition}

At that time, not much justification was required to
contribute a theorem, so it was asserted that
\(T_2\Rightarrow KC\Rightarrow US\Rightarrow T_1\),
based upon a links to a wiki called TopoSpaces. In
2017, I received a small institutional grant to tidy up the 
$\pi$-Base and ensure it aligned with the peer-reviewed
literature, and I was able to justify this by the following
theorem from Wilansky's \textit{American
Mathematcal Monthly} 1967 article entitled
\textit{Between $T_1$ and $T_2$}.

\begin{theorem}[Thm 1 of \cite{MR0208557}]
\[T_2\Rightarrow KC\Rightarrow US\Rightarrow T_1\]
with no arrows reversing.
\end{theorem}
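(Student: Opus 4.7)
The plan is to verify the three forward implications directly and then exhibit a separating counterexample at each link. The forward implications are essentially definitional; the substance of the theorem lies in the three counterexamples.

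For the forward chain: given $X$ Hausdorff, $K \subseteq X$ compact, and $x \notin K$, I would separate $x$ from each $k \in K$ by disjoint open sets $U_k \ni x$ and $V_k \ni k$, extract a finite subcover $V_{k_1}, \dots, V_{k_n}$ of $K$, and take $\bigcap_i U_{k_i}$ as an open neighborhood of $x$ missing $K$; so $X \setminus K$ is open, giving $T_2 \Rightarrow KC$. For $KC \Rightarrow US$, I first note that singletons are compact and therefore closed in any KC space, so KC implies $T_1$. Then if $x_n \to p$ and $x_n \to q$ with $p \neq q$, $T_1$ forces only finitely many $x_n$ to equal $q$, since $X \setminus \{q\}$ is an open neighborhood of $p$; so for large $N$ the set $A = \{x_n : n \geq N\} \cup \{p\}$ is compact (a convergent sequence together with its limit) and contains no copy of $q$, and KC makes $A$ closed, contradicting $q \in \overline{A}$. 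For $US \Rightarrow T_1$, if some $y \in \overline{\{x\}} \setminus \{x\}$ existed, then the constant sequence $x_n = x$ would converge to both $x$ and $y$, violating US.

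For the three non-reversals I would produce a $T_1$ space that is not US, a US space that is not KC, and a KC space that is not $T_2$. The cofinite topology on any infinite set immediately handles the first link: it is $T_1$, yet any sequence of distinct points converges to every point, so US fails badly. The main obstacle will be the middle link, US but not KC, since the hypotheses pull in opposite directions: one must produce a compact non-closed set while preventing any convergent sequence from acquiring a second limit. My approach would be to follow the construction Wilansky supplies in the cited paper, typically built by adjoining ideal points to a metric space so as to introduce a compact non-closed set without creating any genuinely new convergent sequences, and then verify each property directly. The KC-but-not-$T_2$ example similarly calls for a non-Hausdorff topology whose compact subsets are forced closed by a cardinality or sequential restriction, and here too I would import Wilansky's example and check the three conditions by hand rather than attempt a fresh construction.
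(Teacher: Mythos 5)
Your three forward implications are all correct and completely argued: the finite-subcover separation for $T_2\Rightarrow KC$, the tail-plus-limit compact set for $KC\Rightarrow US$ (including the necessary preliminary observation that $KC$ gives $T_1$ so that only finitely many terms can equal the second limit), and the constant-sequence argument for $US\Rightarrow T_1$ are each sound. The cofinite topology is indeed the standard witness that $T_1\not\Rightarrow US$. It is worth noting, though, that you have written considerably more of a proof than the paper does: the paper offers no proof at all for this statement. It attributes the theorem to Wilansky's 1967 \textit{Monthly} article, remarks that three counterexamples ($KC$-not-$T_2$, $US$-not-$KC$, $T_1$-not-$US$) had to be entered into the $\pi$-Base, and then explicitly declines to describe them, directing the reader to the database instead. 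So your approach is not in conflict with the paper's; it simply fills in the definitional implications the paper treats as immediate, which is exactly what a self-contained writeup should do.

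The one substantive shortfall is that two of the three non-reversals remain promissory notes. Saying you would ``import Wilansky's example and check the conditions by hand'' identifies the right source but does not yet constitute a proof; for a complete argument you would need to actually exhibit the spaces. For the record, the standard $KC$-not-$T_2$ witness is the one-point compactification of $\mathbb{Q}$ (compact, not Hausdorff since $\mathbb{Q}$ is not locally compact, yet $KC$), and a $US$-not-$KC$ space can be obtained by attaching to a compact Hausdorff space with no nontrivial convergent sequences an extra point whose neighborhoods are co-countable, so that the original space becomes compact but not closed while no new convergent sequences are created. Your instinct about where the difficulty lies is accurate --- the middle link is the delicate one precisely because $US$ and the failure of $KC$ pull in opposite directions --- but as submitted, that link is cited rather than proved.
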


In order for $\pi$-Base to know none of these arrows
can reverse, three counterexamples needed to be provided:
one which was KC-not-$T_2$, one which was US-not-KC, and
one which was $T_1$-not-US. At this stage, rather than
describe to you such counterexamples, or even redirect you
back to my earlier cited survey which exhaustively
accounts each such example, I'll instead remind you
that there's an app for that: Topology.pi-Base.org! What's
important here is that we've finally catalogued all the
properties implied by $T_2$ and implying $T_1$ that are of
interest to researchers.

Ha, of course not. In 2023, Patrick Rabau contributed to
$\pi$-Base the following property defined in M. C. McCord's
1969 \textit{Transactions} paper \textit{Classifying spaces
and infinite symmetric products} \cite{mccord1969classifying}.

\begin{definition}[P143 of $\pi$-Base]
A space is \term{weakly Hausdorff} or
\(wH\) provided
the continuous image of any compact Hausdorff
space into the space is closed.
\end{definition}

In that paper it was observed that this property lies strictly
between $T_2$ and $T_1$ as well. While McCord seemed
unfamiliar with the KC and US properties, in fact, we have the
following.

\begin{theorem}\label{t2chain5}
\[T_2\Rightarrow KC\Rightarrow wH\Rightarrow US\Rightarrow T_1\]
with no arrows reversing.
\end{theorem}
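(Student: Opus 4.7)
The plan is to establish the two new forward implications $KC \Rightarrow wH$ and $wH \Rightarrow US$, and then to exhibit strict counterexamples separating $wH$ from its neighbors; the outer implications $T_2 \Rightarrow KC$ and $US \Rightarrow T_1$, together with three of their non-reversals, already follow from Wilansky's theorem cited above. The first new implication $KC \Rightarrow wH$ is immediate: if $f : K \to X$ is continuous with $K$ compact Hausdorff, then $f(K)$ is compact as the continuous image of a compact space, hence closed by the $KC$ hypothesis.

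The implication $wH \Rightarrow US$ requires more care. First note that applying $wH$ to the inclusion of a one-point space shows each singleton $\{x\} \subseteq X$ is closed, so $X$ is automatically $T_1$. Now suppose $(x_n)$ converges to both $a$ and $b$ in $X$. If $x_n = b$ for infinitely many $n$, then the constant-$b$ subsequence must also converge to $a$, forcing $a \in \overline{\{b\}} = \{b\}$ by $T_1$-ness. Otherwise, pass to a tail on which $x_n \neq b$ for all $n$, and define $f : \omega + 1 \to X$ on the one-point compactification of $\omega$ by $f(n) = x_n$ and $f(\omega) = a$, which is continuous since $(x_n) \to a$. By $wH$ the image $\{x_n : n \in \omega\} \cup \{a\}$ is closed in $X$, and since $(x_n) \to b$ places $b$ in the closure of $\{x_n\}$, it follows that $b \in \{x_n\} \cup \{a\}$; the assumption $b \neq x_n$ for every $n$ then forces $b = a$.

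For the non-reversal clause, four separating examples are needed. Wilansky's counterexamples already supply two of them: any $KC$-not-$T_2$ space is automatically $wH$, and any $T_1$-not-$US$ space automatically fails $wH$. His $US$-not-$KC$ example will fall into either the $wH$-not-$KC$ slot or the $US$-not-$wH$ slot, depending on whether it happens to satisfy $wH$; so only one additional separating example is required. The main obstacle, then, is identifying and verifying this extra counterexample---a task well suited to the $\pi$-Base itself, whose automated deduction can reject candidate spaces whose asserted traits would be inconsistent with the now-established chain, and whose database can be queried for a witness with the required profile.
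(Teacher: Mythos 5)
Your proposal matches the paper's argument for the two new implications: $KC\Rightarrow wH$ because the continuous image of a compact Hausdorff space is compact and hence closed under $KC$, and $wH\Rightarrow US$ via Rabau's observation that a convergent sequence together with its limit is the continuous image of a compact Hausdorff space and hence closed --- indeed you are more careful than the paper's sketch in handling the degenerate case where the second limit recurs as a term of the sequence. Like the paper, you defer the separating counterexamples to the $\pi$-Base rather than constructing them, so the only substantive omission (shared with the paper's own exposition) is an explicit witness for whichever of $wH\not\Rightarrow KC$ or $US\not\Rightarrow wH$ is not already covered by Wilansky's $US$-not-$KC$ space.
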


Since inclusion is a continuous map, \(KC\Rightarrow wH\) follows
immediately. To the best of my knowledge, the first result properly
wiring up $wH\Rightarrow US$ was due to Rabau in
\href{https://math.stackexchange.com/questions/4267169/}
{Math.StackExchange 4267169}: essentially, note that a copy
of a converging sequence with its limit in $\mathbb R$
is compact Hausdorff,
so given a converging sequence with its limit in the space,
by $wH$ this set is closed and therefore cannot have a second limit.

\newtheorem*{fakedef}{Definition(?)}
\newtheorem*{fakethm}{Theorem(?)}

It was at this point I decided to poke around the literature
myself and look for any other significant investigations to
properties within this spectrum of $T_1$-not-$T_2$. Indeed,
I found another well-studied weakening of Hausdorff, along
with the following two theorems.

\begin{fakedef}
A space \(X\) is said to be \term{\(k\)-Hausdorff} or \(kH\) provided that
the diagonal \(\Delta_X=\{\langle x,x\rangle:x\in X\}\) is \(k\)-closed in the product topology
on \(X^2\).
\end{fakedef}

\begin{fakethm}[Theorem 2.1 of \cite{MR0374322}]
    \(kH\Rightarrow KC\).
\end{fakethm}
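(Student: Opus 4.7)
The plan is to deduce $KC$ from $kH$ by establishing Hausdorffness of a specific small compact subspace: given compact $C \subseteq X$ and any $x \in X$, I aim to show $C \cup \{x\}$ is Hausdorff, which then immediately implies $C$ is closed in $C \cup \{x\}$ (compact subspaces of Hausdorff spaces are closed), giving $x \notin \overline{C}$ whenever $x \notin C$.

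To carry this out, I would first form $K := (C \cup \{x\}) \times (C \cup \{x\})$, noting that this is a compact subspace of $X^2$ as the product of two compact subspaces of $X$. Appealing to the hypothesis that $\Delta_X$ is $k$-closed in $X^2$, one concludes that $\Delta_X \cap K = \Delta_{C \cup \{x\}}$ is closed in $K$. Since a space is Hausdorff exactly when its diagonal is closed in its square, this makes $C \cup \{x\}$ Hausdorff. Then $C$ is compact in the Hausdorff space $C \cup \{x\}$, hence closed there; provided $x \notin C$, the complement $\{x\}$ is open in $C \cup \{x\}$, so there is an open $V \subseteq X$ with $V \cap (C \cup \{x\}) = \{x\}$, i.e., an open neighborhood of $x$ disjoint from $C$. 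Thus $x \notin \overline{C}$, and $C$ is closed.

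The main delicate point is the appeal to the $kH$ hypothesis: the argument goes through smoothly when ``$k$-closed'' is read in the ``compact subspace'' sense (a set $A$ is $k$-closed in $Y$ iff $A \cap L$ is closed in $L$ for every compact subspace $L \subseteq Y$), so that it applies directly to the compact subspace $K$. If instead one adopts the formulation via continuous maps from compact Hausdorff spaces, then to invoke $kH$ on the inclusion $K \hookrightarrow X^2$ one would need $K$ to already be compact Hausdorff---precisely what we are trying to deduce. Pinning down which definition is in force in the source, and if necessary supplying a net/ultrafilter workaround that produces a compact Hausdorff space mapping into $X^2$ whose preimage of $\Delta_X$ fails to be closed, is the main technical item to settle.
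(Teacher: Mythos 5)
Your argument is correct, and it supplies a genuine proof where the paper itself only defers to a citation: the text derives $k_1H\Rightarrow KC$ ``immediately'' from an alternative characterization stated in \cite{MR0374322} (a space is $k_1H$ iff each compact subspace is closed and Hausdorff), whereas you prove the relevant direction of that characterization directly. Your key move --- adjoining the test point $x$ to the compact set $C$ so that $(C\cup\{x\})^2$ is a compact subspace of $X^2$, extracting Hausdorffness of $C\cup\{x\}$ from the closedness of its diagonal, and then concluding that $C$ is closed in $C\cup\{x\}$ and hence misses a neighborhood of $x$ --- is exactly the right trick, and it is what makes the proof yield closedness in $X$ rather than merely Hausdorffness of $C$ itself (which is all one gets from applying the hypothesis to $C\times C$ alone). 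So your route is more self-contained; the paper's buys brevity at the cost of leaning on the source.

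Your closing caveat about which notion of $k$-closed is in force is not a loose end in your proof so much as the entire point of the surrounding narrative: the cited Theorem~2.1 uses the compact-subspace sense (what the paper later calls $k_1$-closed), under which your argument is complete as written. You should abandon the hope of a ``net/ultrafilter workaround'' for the other reading: under the continuous-image-of-compact-Hausdorff sense ($k_2$-closed), the implication $k_2H\Rightarrow KC$ is simply \emph{false}. The paper's chain $KC\Rightarrow wH\Rightarrow k_2H$ with no arrows reversing, witnessed by the space S165 (the one-point compactification of the Arens--Fort space, which is $k_2H$ but not $wH$ and hence not $KC$), shows that $k_2H$ is strictly weaker than $KC$. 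So the definitional fork you identified is precisely where the two apparently contradictory theorems in the literature part ways.
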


\begin{fakethm}[Proposition 11.2 of \cite{rezkcompactly}]
    \(wH\Rightarrow kH\).
\end{fakethm}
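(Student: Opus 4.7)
Unpacking the standard definition that a set $A \subseteq Y$ is $k$-closed when $f^{-1}(A)$ is closed in $K$ for every continuous $f \colon K \to Y$ with $K$ compact Hausdorff, the task reduces to showing the following: for every continuous $f = (f_1, f_2) \colon K \to X^2$ with $K$ compact Hausdorff, the equalizer $E = \{k \in K : f_1(k) = f_2(k)\}$ is closed in $K$. My plan is to locate a Hausdorff subspace $A \subseteq X$ through which both $f_1$ and $f_2$ factor, so that $E$ is the preimage of the closed diagonal $\Delta_A$ under the continuous pairing $(\tilde f_1, \tilde f_2) \colon K \to A \times A$, and hence closed.

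The natural candidate is $A := f_1(K) \cup f_2(K)$, realized as the image of the combined map $F \colon K \sqcup K \to X$. Since $K \sqcup K$ is compact Hausdorff, $wH$ immediately gives that $A$ is closed in $X$. The hard part, and the real content of the proof, is establishing the following supporting lemma: if $X$ is $wH$, $L$ is compact Hausdorff, and $g \colon L \to X$ is continuous, then $g(L)$ with the subspace topology from $X$ is Hausdorff.

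To prove this lemma, I would fix distinct $a, b \in g(L)$ and separate their disjoint closed preimages $g^{-1}(a)$ and $g^{-1}(b)$ by disjoint open sets $U, V \subseteq L$, using that $L$ is normal and that singletons of $X$ are closed (the latter is an easy application of $wH$ to the constant map from a one-point compact Hausdorff space). The key move is to apply $wH$ a second time, now to the compact Hausdorff closed sets $L \setminus U$ and $L \setminus V$: their images $C_U := g(L \setminus U)$ and $C_V := g(L \setminus V)$ are closed in $X$, hence closed in $g(L)$. By construction $a \notin C_U$ and $b \notin C_V$, while $C_U \cup C_V = g(L)$, because $U \cap V = \emptyset$ forces $(L \setminus U) \cup (L \setminus V) = L$. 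The open complements $g(L) \setminus C_U$ and $g(L) \setminus C_V$ then furnish disjoint open neighborhoods of $a$ and $b$, respectively.

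With the lemma in hand, the main result is immediate: both $f_1$ and $f_2$ factor continuously through the Hausdorff subspace $A$ as $\tilde f_1, \tilde f_2 \colon K \to A$, and $E = (\tilde f_1, \tilde f_2)^{-1}(\Delta_A)$ is closed as the preimage of the closed diagonal in $A \times A$ under a continuous map.
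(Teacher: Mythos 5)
Your argument is correct, and it is worth noting that the paper itself supplies no proof of this statement at all: it is quoted as Proposition 11.2 of the cited Rezk notes precisely to set up the apparent paradox with the Lawson--Madison result, and later the implication $wH\Rightarrow k_2H$ is again simply attributed to that source. What you have written is a correct, self-contained version of the standard argument behind that citation. The heart of it is your supporting lemma --- in a $wH$ space the continuous image of a compact Hausdorff space is Hausdorff in the subspace topology --- and your proof of it is sound: singletons are closed by applying $wH$ to one-point spaces, so $g^{-1}(a)$ and $g^{-1}(b)$ are disjoint closed sets separated by disjoint open $U,V$ in the normal space $L$; the images $g(L\setminus U)$ and $g(L\setminus V)$ are closed by two more applications of $wH$, cover $g(L)$, and miss $a$ and $b$ respectively, so their complements separate $a$ from $b$. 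The reduction of $k_2$-closedness of $\Delta_X$ to the closedness of $\Delta_A$ in $A\times A$ for $A=f_1(K)\cup f_2(K)$ (the image of $K\sqcup K$) then works exactly as you say; the observation that $A$ is closed in $X$ is true but not actually needed, only its Hausdorffness is. This lemma --- that compact Hausdorff images in weak Hausdorff spaces are themselves compact Hausdorff --- is a genuinely useful fact in its own right, and is closely related to the lemma the paper does prove later, that one-point compactifications of $KC$ spaces are $k_2H$.
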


Oh no. We've already established that $KC\Rightarrow wH$,
and that this arrow does not reverse, so what exactly is
going on here?

As it turned there was no mistake made by the authors,
even though they both operated using the same definition
given above for $k$-Hausdorff. However, they were
using different definitions for \textit{$k$-closed}. So
let's now proceed with a bit more care:

\begin{definition}
A subset \(C\) of a space is \term{\(k_1\)-closed} provided for every
compact subset \(K\) of the space, the intersection \(C\cap K\)
is closed in the subspace topology for \(K\).
\end{definition}

\begin{definition}
A subset \(C\) of a space \(X\) is \term{\(k_2\)-closed} provided for every 
compact Hausdorff space \(K\) and continuous map \(f:K\to X\),
the preimage \(f^\leftarrow[C]\) of $C$ is closed in \(K\).
\end{definition}

\begin{definition}
A space \(X\) is said to be \term{\(k_i\)-Hausdorff} or \(k_iH\) provided that
the diagonal \(\Delta_X=\{\langle x,x\rangle:x\in X\}\) is \(k_i\)-closed in the product topology
on \(X^2\).
\end{definition}

With this mindful distinction made, we now
may establish the following theorem.

\begin{theorem}
\[T_2\Rightarrow k_1H\Rightarrow
KC\Rightarrow wH\Rightarrow k_2H
\Rightarrow US\Rightarrow T_1\]
with no arrows reversing.
\end{theorem}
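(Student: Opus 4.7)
The plan is to verify each of the six forward implications in turn, and then exhibit six non-reversal examples. Three of the implications are essentially formal: $T_2 \Rightarrow k_1H$ holds because $\Delta_X$ is already closed in $X^2$, so its intersection with any compact $K \subseteq X^2$ is automatically closed in $K$; $KC \Rightarrow wH$ holds because the continuous image of a compact Hausdorff space is a compact subset of $X$, which is closed under the $KC$ hypothesis; and $US \Rightarrow T_1$ follows by observing that if $x \neq y$ with $x \in \overline{\{y\}}$, then the constant sequence $y, y, y, \dots$ converges to both $y$ and $x$, violating $US$.

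For $k_1H \Rightarrow KC$, I would first promote the hypothesis to the statement that every compact subspace of $X$ is Hausdorff: if $C \subseteq X$ is compact then so is $C \times C \subseteq X^2$, and $\Delta_X \cap (C \times C) = \Delta_C$ is closed in $C \times C$ by $k_1H$, so $C$ is Hausdorff as a subspace. Given $C$ compact and $p \in X \setminus C$, apply the same observation to $K = C \cup \{p\}$, which is again compact and hence compact Hausdorff, hence normal; separate $\{p\}$ from the closed set $C \subseteq K$ by disjoint open sets in $K$, and the one around $p$ is the trace of an $X$-open neighborhood of $p$ disjoint from $C$, showing $p \notin \overline{C}$.

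The step $wH \Rightarrow k_2H$ is the most delicate. Given continuous $f = (\phi, \psi) : L \to X^2$ with $L$ compact Hausdorff, I would set $A = \phi(L) \cup \psi(L)$, which is the image of $\phi \sqcup \psi$ from the compact Hausdorff space $L \sqcup L$ into $X$. By $wH$ the set $A$ is closed in $X$; the key claim is that $A$ is also Hausdorff as a subspace, which I would prove by a normality-plus-$wH$ argument. For distinct $u, v \in A$, the preimages in $L \sqcup L$ are closed (using $wH \Rightarrow T_1$), so normality separates them by disjoint open sets, and the $wH$-closed images of their complements yield disjoint open separations of $u$ and $v$ in $A$. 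Since $f$ factors through $A \times A$ and $\Delta_A$ is closed there, $f^{-1}(\Delta_X) = f^{-1}(\Delta_A)$ is closed in $L$. For $k_2H \Rightarrow US$, given $x_n \to x$ and $x_n \to y$, apply $k_2H$ to the map from the convergent-sequence space $K = \{0\} \cup \{1/n : n \geq 1\}$ into $X^2$ sending $0 \mapsto (x, y)$ and $1/n \mapsto (x_n, x_n)$; continuity at $0$ uses that any basic neighborhood $U \times V$ of $(x,y)$ eventually contains every $(x_n, x_n)$, and closedness of the preimage of $\Delta_X$ then forces $x = y$.

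Finally, to certify that no arrow reverses I need six separating spaces. Two of them, a $wH$-not-$KC$ space and a $T_1$-not-$US$ space, are already standard in the Wilansky/McCord chain and catalogued in the $\pi$-Base. The main obstacle I expect is producing the four \emph{new} counterexamples demanded by the insertion of $k_1H$ and $k_2H$ into the chain: a $k_1H$-not-$T_2$ and a $KC$-not-$k_1H$ space (certifying that $k_1H$ sits strictly between $T_2$ and $KC$), and similarly a $k_2H$-not-$wH$ and a $US$-not-$k_2H$ space. Once these four examples are located or constructed (ideally and naturally by filtering the $\pi$-Base, which is precisely the use case motivating the present paper), the full strictness of the seven-class chain follows.
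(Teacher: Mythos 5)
Your six forward implications are all sound, and in places you supply more detail than the paper does: the paper simply cites the characterization of $k_1H$ (compact subspaces are closed and Hausdorff) from the literature for the first two arrows and cites Rezk for $wH\Rightarrow k_2H$, whereas you reconstruct both arguments; your normality-plus-$wH$ separation argument showing that the image $A$ is a closed Hausdorff subspace is essentially the standard proof of Rezk's Proposition 11.2 and is correct. Your proof of $k_2H\Rightarrow US$ via the convergent-sequence space $K=\{0\}\cup\{1/n:n\geq 1\}$ and the map $0\mapsto\langle x,y\rangle$, $1/n\mapsto\langle x_n,x_n\rangle$ is exactly the argument in the paper (this is the author's own contribution to the chain), so the ``theorem'' half of the statement is in good shape.

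The gap is in the ``no arrows reversing'' half, which you defer entirely to ``locating or constructing'' the separating examples, ideally by filtering the $\pi$-Base. That defers precisely the point where new mathematics was required. The paper reports that five of the six separations come from existing or lightly modified counterexamples, but one does not: a space that is $k_2H$ but not $wH$ did not exist in the literature or in the database before this work, so no amount of filtering would have produced it. The paper constructs it explicitly as the one-point compactification $X=Y\cup\{\infty\}$ of the Arens--Fort space $Y=\mathbb Z\cup\{p\}$ (now S165 of the $\pi$-Base), supported by a new lemma that the one-point compactification of any $KC$ space is $k_2H$; failure of $wH$ is witnessed by $\mathbb Z\cup\{\infty\}$, which is a continuous image of a compact Hausdorff space (the one-point compactification of a countable discrete space) yet is not closed in $X$. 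Without this example, or some substitute, the strictness of the arrow $wH\Rightarrow k_2H$ is unproved, so your argument establishes only the implications and not the full theorem.
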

\begin{proof}
The first two arrows followed immediately from
an alternative characterization of $k_1H$ given in
\cite{MR0374322}: each
compact subspace is closed and Hausdorff.
The next arrow was noted in Theorem \ref{t2chain5},
and the fourth $wH\Rightarrow k_2H$ is from
\cite{rezkcompactly}.

With the final arrow $US\Rightarrow T_1$
already established, it remained to
be shown that $k_2H\Rightarrow US$. This is my
personal contribution to the puzzle, first posted
online at [redacted] and now detailed in my article
[redacted]. The idea of this result comes from noting
that the topological space which is simply a sequence converging to
a unique limit, i.e.
$K=\{1/n:n\in\mathbb Z^+\}\cup\{0\}\subseteq\mathbb R$,
is a compact Hausdorff space.
Then given a sequence $a_n$ in a $k_2H$ space $X$
converging to the limits $l_0,l_1$, the map
$f:K\to X^2$ defined by
$f(1/n)=\tuple{a_n,a_n}$ and
$f(0)=\tuple{l_0,l_1}$ is continuous. Then since
$\Delta_X$ is $k_2$-closed, we have
$f^\leftarrow[\Delta_X]\supseteq\{1/n:n\in\mathbb Z^+\}$
closed. It follows that $0\in f^\leftarrow[\Delta_X]$,
$f(0)=\tuple{l_0,l_1}\in\Delta_X$, and thus the limit
$l_0=l_1$ is unique.

These arrows were each shown to not reverse
by considering existing (or lightly modified) counterexamples,
other than one: I needed to construct a space which
is $k_2H$ but not $wH$. This turned out to be what's now
S165 of $\pi$-Base: the one-point compactification
$X=Y\cup\{\infty\}$ of
the Hausdorff ``Arens-Fort space'' $Y=\mathbb Z\cup\{p\}$
(itself S23 of $\pi$-Base). This
followed from a new lemma:
the one-point compactification of any $KC$ space
(and thus any Hausdorff space) is $k_2H$.
But $X$ is not $wH$: by removing
the particular point $p$ of the Arens-Fort space, the remainder
$\mathbb Z\cup\{\infty\}$
is the one-point (Hausdorff) compactification of a countable
discrete space (and thus the continuous image of a compact
Hausdorff space), but is not closed in $X$.
\end{proof}

All in all, I found it fascinating (and not just a little
serendipitous), that the properties $k_1H$, $k_2H$ and $wH$
investigated more or less independently
all fell in line with Wilansky's original spectrum of
$T_1$ weakenings of Hausdorff. One may consider various
natural modifications of these to obtain more intermediate
examples; for example, requiring unique limits of
transfinite sequences, which would live between $k_2H$ and $US$.
And there are indeed several other $T_1$-not-$T_2$ properties
found in the literature, including semi-Hausdorff
($sH$, P169 of $\pi$-Base), locally Hausdorff ($lH$, P84),
and ``has closed retracts'' ($RC$, P101), which do not
fall in line, even among themselves.

\begin{theorem}
\,

    % https://tikzcd.yichuanshen.de/#N4Igdg9gJgpgziAXAbVABwnAlgFyxMJZABgBpiBdUkANwEMAbAVxiRABUB9AJhAF9S6TLnyEUARnJVajFmwDWncQAl+gkBmx4CRblOr1mrRCADSAYTVCtoogGZ9Mo2wDuqgdZE6UAFkeG5E0Vud3VNLzFkAFZ-WWMQAFUAZSsNYW1IgDZY5xMucVTwjKJJbmkA+IAlSw80m29kBzKDOLY4UM9ilBjmp0CQBndpGCgAc3giUAAzACcIAFskMhAcCCQAdmoGLDB4qAgmACMGVmoACxg6KCQwJgYGahw6LAY2SF3U2YWlx7XESQGOz2B2OpxAFyuNzuDxWz1eJnerFqX0W-1+SD0gI+Jn2RxOIHOl2uiFu90ecLeBCR6hRGPRiAcWOBeLBEOJpJhTxelI+yLmqMZqyQfiZbFxoIJ4KJULJsO5CKpn35wvpMVFOJB+MJkJJ0PJ8vAir530QaqFiGy6pA4q1Up1HP18MNvJpysQmxWf0t22x1s1rOlutlXKdiKVJoAHPTvUCxf7JWyZZyKQqXdM3ctzVGrTaA-a9XLQ0bXSbM38AJxbWMalkJwMOws86npk2Vz1IGO+3N1-PBlPOpEUPhAA
\begin{center}
\begin{tikzcd}
T_2 \arrow[rdd, Rightarrow] \arrow[r, Rightarrow] \arrow[rrrdd, Rightarrow] \arrow[rrrrrdd, Rightarrow] & k_1H \arrow[r, Rightarrow]     & KC \arrow[r, Rightarrow] & wH \arrow[r, Rightarrow]     & k_2H \arrow[r, Rightarrow] & US \arrow[r, Rightarrow]   & T_1 \\
                                                                                                        &                                &                          &                              &                            &                            &     \\
                                                                                                        & sH \arrow[rrrrruu, Rightarrow] &                          & lH \arrow[rrruu, Rightarrow] &                            & RC \arrow[ruu, Rightarrow] &    
\end{tikzcd}
\end{center}
with no arrows reversing or missing.\label{thmdiagram}
\end{theorem}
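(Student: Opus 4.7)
The plan is to first establish the six newly introduced arrows involving $sH$, $lH$, and $RC$, then to certify that no other implications hold among the ten properties in the diagram. The previous theorem already provides the entire top chain together with the non-reversibility of each of its arrows, so the top row can be cited directly. For the new arrows, $T_2 \Rightarrow sH$, $T_2 \Rightarrow lH$, and $T_2 \Rightarrow RC$ all follow from the respective definitions: a Hausdorff space trivially has disjoint closures of distinct points (hence is semi-Hausdorff), is locally Hausdorff via itself as the neighborhood, and has closed retracts since a retract of a Hausdorff space is Hausdorff and hence closed. The implications $sH \Rightarrow T_1$, $lH \Rightarrow T_1$, and $RC \Rightarrow T_1$ are likewise standard and can be pulled from the references attached to the respective $\pi$-Base property files (P169, P84, P101).

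The bulk of the proof is then the "no arrows missing or reversing" clause, which amounts to a verification that every remaining potential implication among the ten properties fails. Concretely, for each pair $(X, Y)$ with $X \in \{sH, lH, RC\}$ and $Y \in \{k_1H, KC, wH, k_2H, US\}$, and for each ordered pair within $\{sH, lH, RC\}$ itself, I need a witness space satisfying $X$ but not $Y$ (and symmetrically $Y$ but not $X$). My approach is to defer each such task to the $\pi$-Base: for each non-implication in question, query the database for a space witnessing it. Because of the automated deduction engine, the query either returns an explicit counterexample already catalogued (with the chain of theorems that justifies its properties), or reports that no such space is currently known, in which case the putative non-implication must be examined by hand.

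The hard part will be exactly those queries that return no witness. In that situation one has two possibilities: either there is a genuinely new theorem hiding in the diagram, in which case this proof is false as stated and the diagram needs an additional arrow, or there is a known space in the literature missing from $\pi$-Base, in which case it must be modeled and contributed. The narrative preceding this theorem already illustrates how such a process unfolded for $k_2H \not\Rightarrow wH$, culminating in the construction of space S165 as the one-point compactification of the Arens--Fort space. I would expect that the side properties $sH$, $lH$, $RC$, each of which is relatively exotic and not obviously linearly ordered with the $k_i H$ hierarchy, generate several such stress-tests; these are the loci where genuine mathematical content, as opposed to bookkeeping, enters the proof.

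Finally, the theorem should be understood as a claim about the current state of a curated slice of $\pi$-Base rather than a static mathematical assertion: the diagram is complete relative to the ten properties shown, and its correctness is certified jointly by the human-contributed counterexamples, the human-contributed theorems, and the machine-generated deductive closure that links them. Presenting the proof therefore reduces to listing, for each non-arrow in the diagram, the $\pi$-Base ID of a witnessing space, and citing the top-chain theorem together with the six elementary implications above for the arrows that are drawn.
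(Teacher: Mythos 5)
Your proposal matches the paper's own treatment: the author provides no written proof of this theorem and instead directs the reader to the $\pi$-Base application to obtain the details, which is precisely the strategy you describe (cite the previously established top chain, note the elementary arrows into and out of $sH$, $lH$, and $RC$, and discharge every non-implication by querying the database for a catalogued counterexample). Your proposal is, if anything, more explicit than the paper about what such a verification entails, including the honest caveat that a query returning no witness signals either a missing space or a missing theorem.
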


I again encourage the reader to use the $\pi$-Base application
to obtain any desired details for the proof of
Theorem \ref{thmdiagram}.
I hope you enjoyed this recounting of my investigation,
and perhaps are now primed to agree with the following claim. 

\textbf{Semantic databases are a treasure trove of not only ``well-known''}
(and now, easily accessible to any interested student or researcher
of mathematics)
\textbf{results from the literature, but also an excellent vehicle for
driving future mathematical inquiry.} As of writing,
there are six $T_1$ spaces on the $\pi$-Base for which the
application cannot deduce the semi-Hausdorff property. Unless someone
gets to it first, this would be a perfect graduate student
project, and this is
not the only piece of low-hanging fruit on the
$\pi$-Base. But these things can be subtle,
and simple-sounding problems in general topology
frequently end up yielding numerous questions and
answers at the research level \cite{pearl2011open}.

So I look forward to welcoming more mathematicians to
our community and others that maintain semantic
mathematical databases. However, a major limitation is that
while significant mathematical labor is involved in
contributing to a semantic mathematical database, it
generally doesn't ``count'' when it comes time for one
to enter the job market, or submit one's annual faculty activity
report. To that point, I'll admit that before tenure,
I mindfully avoided spending too much time on $\pi$-Base in order
to focus on my traditional academic output, despite my strong
personal belief in the mathematical value of the service.
But I'll also note that after tenure, when I 
decided to take my work
on $\pi$-Base more seriously, several investigations such as
the one I just described have naturally kept me at least as active
in authoring traditional publications as I was before.
And with recognition from established mathematics organizations
such as the American Institute of Mathematics \cite{AIMath},
the visibility that comes with leading researchers such
as Kevin Buzzard and Terrence Tao promoting machine-assisted mathematics,
and advocacy from newer organizations such as
\texttt{\href{https://code4math.org}{code4math.org}},
I'm optimistic that not only will the experience of mathematics
research change dramatically in the coming decade, but as a
community we will find ways to appropriately value the academic labor required
to develop and maintain the shared digital infrastructure our
research increasingly relies upon.

\bibliographystyle{vancouver}
\bibliography{sample}

% After you receive your provisional accept letter, you need to re-submit your manuscript with the following information  (commented out in the blind submission) filled in for each author. Just remove the \% comment coding, fill out the information, and re-submit your manuscript.

%\begin{biog}
%\item[Author Name 1] Insert author bio here.
%\begin{affil}
%Department of Mathematics, University America, Washington DC 20036\\
%authorname@ua.edu
%\end{affil}
%\end{biog}

%\begin{biog}
%\item[Author Name 2] Insert author bio here.
%\begin{affil}
%Department of Mathematics, University America, Washington DC 20036\\
%authorname@ua.edu
%\end{affil}
%\end{biog}

\vfill\eject

\end{document}